\theoremstyle{plain}
\newtheorem{corollary}{Corollary}
\newtheorem{definition}{Definition}
\newtheorem{lemma}{Lemma}
\numberwithin{equation}{section}
\begin{document}
\title[Dynamic Longest Subsequence]{The Dynamic Longest Increasing Subsequenc Problem}
\author{Alex Chen, Timothy Chu, Nathan Pinsker}%

\maketitle

\begin{abstract}
In this paper, we construct a data structure using a forest of red-black trees to efficiently compute the longest increasing subsequence of a dynamically updated sequence. Our data structure supports a query for the longest increasing subsequence in $O(r+\log n)$ worst case time and supports inserts anywhere in the sequence in $O \left(r\log{n/r}\right)$ worst case time, where $r$ is the length of the longest increasing subsequence. The data structure can be augmented to support $O(\log n)$ worst case time insertions if the insertions are performed at the end of the sequence. The data structure presented can be augmented to support delete operations in the same time as insertions.
\end{abstract}

\section{Introduction}

The following dynamic longest increasing subsequence problem will be the primary focus of our paper:  given an array $A$ of $n$ elements $e_1, ..., e_n$, find an algorithm to support the following two operations:

\begin{itemize}
\item
\textbf{Insert}: Insert an item after an element $e_i$ or before element $e_1$.

\item
\textbf{Delete}: Delete an element $e_i$.

\item
\textbf{Query}: Calculate the length of the longest increasing subsequence of the array A.
\end{itemize}

A \emph{subsequence} of $A$ is the ordered subset of $A$ corresponding to some list of $m$ indices ($m \le n$) $i_1, i_2, ..., i_m$ with $i_k < i_{k+1}$ for all $k$: $e_{i_1}, e_{i_2}, ..., e_{i_m}$. An increasing subsequence satisfies the requirement that $e_{i_k} < e_{i_{k+1}}$ for all $k$.
\\

The longest increasing subsequence is often used as a measure of how close an input sequence is to being fully sorted (``sortedness''). A solution to the dynamic longest increasing subsequence problem would allow us to dynamically maintain the sortedness of a list as it is modified. Sortedness is useful for applications that require an approximately sorted list and allows us to quickly check whether we need to sort the list or whether it is "close enough'' to be used already. For example, in search engines, we often rank web pages. Applications keep a list of billions of changing web pages that are stored in order by some metric. When a page changes, its score may change, breaking the sorted order. However, we do not want to re-sort the entire list after each page change. Instead, we can use the list's sortedness to determine when to re-sort the list \cite{gopalan07}. A solution to the dynamic longest increasing subsequence problem brings the possibility of inserting new web pages and removing some from the list while still measuring sortedness.

\section{Prior Work}

Significant research has been done in more specific versions of the dynamic longest increasing subsequence problem but not in this generalized version that includes both insertions and deletions anywhere in the list.
\\

Much work has been put into the variation where insertions are allowed only at the end of the list and no deletions are allowed. This is effectively the online version of the longest increasing subsequence problem. There exists a well-known solution that uses $O(n)$ space and $O(n \log n)$ time and finds the length of the longest increasing subsequence exactly. Using dynamic programming, we can keep track of the smallest possible last number of a sequence for each possible sequence length. We compute the table $f[l]$, where $f[l]$ is equal to the smallest possible last element of an increasing subsequence of length $l$ and $\infty$ if there are no increasing subsequences of length $l$. We process the elements of $A$ in order and update $f[l]$ accordingly. The elements of $f[l]$, not including any $\infty$ values, are always in strictly increasing order. When a new element $e_i$ is considered, at most one value of $f[l]$ changes, and this changing value can be found by binary search. Each binary search takes $O(\log n)$ time, and we perform one binary search for each element $e_i$ that we process, resulting in an $O(n \log n)$ algorithm. Because this algorithm processes the elements of $A$ in order and uses only one pass through the data, this algorithm works as an online algorithm to compute the answer exactly in $O(n \log n)$ time.
\\

Ergun and Jowhari (2008) worked with online algorithms for approximating the length of the longest increasing subsequence. This is equivalent to only performing inserts at one end of the list and outputting the length of the longest increasing subsequence after each insert. No deletions are considered here. They proved that in order to approximate the length of the longest increasing subsequence to within $(1 + \epsilon)$, $\Omega(\sqrt{n})$ space is required \cite{ergun08}.
\\

Gopalan et al. (2007) also worked with solving the longest increasing subsequence problem when the list given as a data stream. They proved that a lower bound of $\Omega(n)$ space was required to exactly calculate length of the longest increasing subsequence. Thus, approximation is necessary to use sublinear space. This group also presented a $O(\sqrt{n})$-space deterministic algorithm to approximate the length of the longest increasing subsequence to within a factor of $(1 + \epsilon)$ \cite{gopalan07}.

\section{Main result}
We present the following statement, which is equivalent to the dynamic longest increasing subsequence problem defined earlier. The remainder of our paper is devoted to solving this problem efficiently:

\begin{framed}
\textit{Input}: A sequence of pairs of numbers $(i, v_i)$ (fed online) (all $i$'s are distinct).

\textit{Output}: A sequence $i_1 < i_2 < \ldots  < i_k$ with $v_{i_1} < v_{i_2} < \ldots < v_{i_k}$ of maximum length. 

\end{framed}
To avoid the double-subscript notation, we define $f(i):= v_i$.
\begin{definition} 
An increasing sequence of indices $i_1 < i_2 < \ldots  < i_k$ satisfies $f(i_1) < f(i_2) < \ldots f(i_k)$
\end{definition}

\begin{definition} 
$i << j \Longleftrightarrow i<j, f(i) < f(j)$. 
\end{definition}

Note $<<$ defines a partially ordered set on the elements $(i, f(i))$. Basic dynamic topological sorting algorithms provide an $O(n)$ cost per update guarantee. Bounds better than $O(n)$ for the more general dynamic topological sort problem have been found in \cite{PK}, \cite{MNR}, and \cite{AHRSZ}, although the performance guarantee in our paper on the dynamic increasing sequence problem is faster than those known for the general dynamic topological sort algorithm.

For each index $i$, define $l(i)$ to be the length of the longest increasing subsequence ending at $i$. Index $i$ is said to be in \textit{level} $k$ if $l(i) = k$.

The proof for the next three lemmas are straightforward and will be provided in the appendix.
\begin{lemma} 
\label{1}
Let $j$ be the second-last index of a length $l(i)$ sequence ending at $i$. Then $l(j) = l(i)-1$. 
\end{lemma}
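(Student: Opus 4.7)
The plan is to prove the equality $l(j) = l(i)-1$ by establishing the two inequalities $l(j) \ge l(i)-1$ and $l(j) \le l(i)-1$ separately. Neither direction looks substantive; the main content is simply unwinding the definitions of $l(\cdot)$ and of the relation $<<$.

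First I would handle the lower bound $l(j) \ge l(i)-1$ by truncation. By hypothesis, there exists an increasing subsequence $i_1 < i_2 < \cdots < i_{l(i)-1} < i$ of length $l(i)$ ending at $i$, whose second-last index is $j = i_{l(i)-1}$. Dropping the final index $i$ from this sequence yields an increasing subsequence of length $l(i)-1$ ending at $j$, so by the definition of $l(j)$ as the maximum such length we get $l(j) \ge l(i)-1$ immediately.

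For the upper bound $l(j) \le l(i)-1$, I would argue by contradiction via extension. Suppose instead that $l(j) \ge l(i)$, so there is an increasing subsequence $j_1 < j_2 < \cdots < j_{l(j)} = j$ of length at least $l(i)$ ending at $j$. Because $j$ appears immediately before $i$ in an increasing subsequence we have $j << i$, that is $j < i$ and $f(j) < f(i)$. Appending $i$ to the sequence $j_1, \ldots, j_{l(j)}$ therefore produces an increasing subsequence of length $l(j)+1 \ge l(i)+1$ ending at $i$, contradicting the maximality defining $l(i)$. Hence $l(j) \le l(i)-1$, and combined with the previous paragraph this gives $l(j) = l(i)-1$.

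I do not anticipate any real obstacle here; the only thing to be careful about is invoking the relation $j << i$ correctly, which is justified because $j$ and $i$ are consecutive entries of a witnessing increasing subsequence and hence satisfy both the index inequality and the value inequality built into the definition of $<<$.
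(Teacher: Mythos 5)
Your proof is correct: the truncation argument gives $l(j)\ge l(i)-1$, the extension-by-$i$ argument (using $j<<i$ since $j$ and $i$ are consecutive entries of a witnessing increasing subsequence) gives $l(j)\le l(i)-1$, and together these yield the claim. This is the standard argument the paper defers to its appendix, so there is nothing further to compare.
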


\begin{lemma} 
\label{2}
$i << j \Rightarrow l(i) < l(j)$. 
\end{lemma}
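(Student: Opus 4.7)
The plan is to prove Lemma \ref{2} directly by exhibiting an increasing subsequence ending at $j$ whose length exceeds $l(i)$. By definition of $l(i)$, there exists some increasing sequence of indices $i_1 < i_2 < \ldots < i_k = i$ with $f(i_1) < f(i_2) < \ldots < f(i_k) = f(i)$ and $k = l(i)$. I would like to append $j$ to the end of this sequence and argue that the resulting sequence is still a valid increasing subsequence.

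Next I would verify both conditions required. First, the index condition: since $i_k = i$ and we are given $i < j$, the extended index list $i_1 < \ldots < i_k < j$ is still strictly increasing. Second, the value condition: since $f(i_k) = f(i)$ and the hypothesis $i << j$ gives $f(i) < f(j)$, the extended value list $f(i_1) < \ldots < f(i_k) < f(j)$ is also strictly increasing. Hence $i_1 < i_2 < \ldots < i_k < j$ witnesses an increasing subsequence of length $k+1 = l(i) + 1$ ending at $j$.

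Finally, since $l(j)$ is defined as the maximum length of any increasing subsequence ending at $j$, this witness yields $l(j) \geq l(i) + 1 > l(i)$, which is the desired conclusion. There is no real obstacle here — the argument is just the observation that the partial order $<<$ is exactly the condition needed to concatenate a single element onto an existing increasing subsequence, so the lemma reduces to a one-line extension argument and does not require any case analysis or appeal to Lemma \ref{1}.
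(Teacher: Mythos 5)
Your proof is correct: appending $j$ to a maximum-length increasing subsequence ending at $i$ is exactly the standard argument, and the paper (which defers this ``straightforward'' proof to its appendix) clearly intends the same one-line extension. Nothing is missing.
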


\begin{corollary}
\label{c2}
If $i < j$ and $l(i) = l(j)$, then $f(i) \geq f(j)$.
\end{corollary}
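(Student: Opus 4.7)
The plan is a one-line proof by contrapositive that invokes Lemma \ref{2} directly. Assuming $i<j$, I would suppose for contradiction that $f(i) < f(j)$. Combining the hypothesis $i<j$ with the supposed $f(i) < f(j)$ yields exactly the condition $i \ll j$ from the definition of $\ll$. Then Lemma \ref{2} immediately gives $l(i) < l(j)$, contradicting the hypothesis $l(i) = l(j)$. Hence $f(i) \geq f(j)$.

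The only subtle point worth flagging is why the conclusion is $f(i) \geq f(j)$ rather than strict inequality: the problem statement only requires the $i$'s to be distinct, not the $v_i$'s, and the relation $\ll$ requires a strict inequality on $f$-values, so the case $f(i) = f(j)$ is not ruled out by Lemma \ref{2} and must be left in the conclusion. There is essentially no obstacle here; the corollary is just the contrapositive packaging of Lemma \ref{2} combined with the trichotomy on $f$ given that $i<j$ is already fixed.
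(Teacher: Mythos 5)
Your proof is correct and is exactly the intended argument: the paper labels this as a corollary of Lemma \ref{2} (deferring the writeup to an appendix), and the contrapositive of Lemma \ref{2} together with the fixed hypothesis $i<j$ gives $f(i)\geq f(j)$ immediately. Your remark on why the conclusion is non-strict rather than strict is also accurate.
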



\begin{lemma}
\label{3} 
Inserting a pair $(x, f(x))$ increases $l(i)$ by at most $1$. 
\end{lemma}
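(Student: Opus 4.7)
The plan is to argue by contradiction. Let $l(i)$ denote the length of the longest increasing subsequence ending at $i$ before the insertion of $(x,f(x))$, and let $l'(i)$ denote its value afterwards. I would suppose for contradiction that $l'(i) \geq l(i) + 2$ for some index $i \neq x$, and derive that this forces the existence of a pre-insertion increasing subsequence ending at $i$ of length strictly greater than $l(i)$, which is impossible.

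First I would fix a witnessing subsequence $S'$ of length $l'(i)$ ending at $i$ in the post-insertion sequence. Since the only new element is $x$, either $x \notin S'$ or $x \in S'$. In the first case, every element of $S'$ already existed, so $S'$ is also a valid increasing subsequence in the original sequence ending at $i$; this gives $l(i) \geq l'(i)$, contradicting $l'(i) \geq l(i)+2$. In the second case, note that $x \neq i$, so $i$ is not the removed element; removing $x$ from $S'$ yields a sequence $S'' = S' \setminus \{x\}$ of length $l'(i) - 1 \geq l(i)+1$ consisting entirely of original elements and still ending at $i$. The key small observation, which I would state explicitly, is that deleting one term from an increasing sequence leaves an increasing sequence (on the induced index order), so $S''$ is a valid increasing subsequence of the original sequence ending at $i$. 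This again contradicts the definition of $l(i)$ as the \emph{longest} such subsequence.

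Since both cases lead to a contradiction, we must have $l'(i) \leq l(i)+1$, which is the statement of the lemma. I do not anticipate any real obstacle: the only subtle point is making sure that after deleting $x$ from $S'$, the resulting sequence still ends at $i$ and is still strictly increasing, both of which are immediate from $x \neq i$ and from the fact that a subsequence of an increasing sequence is increasing.
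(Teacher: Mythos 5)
Your argument is correct and is the standard one the paper has in mind (the paper defers this proof to an appendix, calling it straightforward): any post-insertion witness $S'$ ending at $i \neq x$ either avoids $x$, giving a pre-insertion witness of the same length, or contains $x$, and deleting $x$ leaves a pre-insertion increasing subsequence ending at $i$ of length $l'(i)-1$, forcing $l'(i) \leq l(i)+1$. Both cases are handled cleanly, so there is nothing to add.
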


Define $L_k$ to be the set of indices in level $k$ after some number of inserts. Define $L'_k$ to be the set of indices in level $k$ after an additional insertion of $(x,f(x))$. Let $l(i)$ denote the level of $i$ before insertion, and define $l'(i)$ to be the level of $i$ after insertion. Our algorithm provides an efficient way of updating $L_k$ into $L'_k$ during an insertion for all $k$ in which $L'_k$ is not empty. The sets $L_k$ will be referred to as 'level sets'. 

Note that corollary \ref{c2} tells us a sequence of elements in $L_k$ that are increasing on $i$ are non-increasing on $f(i)$

Let $T_k$ be the set of indices $i$ with $l(i) = k, l'(i)=k+1$. Here, the level of the inserted element $x$ is defined as $l(x):= l'(x) -1$.

\begin{lemma}
\label{keyLemma}
$T_{k+1}$ consists of all indices $i \in L_{k+1}$ such that $\exists j \in T_k$ with $j<< i$. 
\end{lemma}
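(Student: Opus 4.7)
The plan is to prove the stated equivalence by a direct two-way argument, using Lemma \ref{1} for the forward direction and a one-line sequence extension for the backward direction.

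For the forward direction, I would assume $i \in T_{k+1}$, so $l(i) = k+1$ and $l'(i) = k+2$. Applying Lemma \ref{1} to the post-insertion length-$(k+2)$ sequence ending at $i$ gives a second-last index $j$ with $j << i$ and $l'(j) = k+1$. To conclude $j \in T_k$ I need $l(j) = k$. If $j = x$, this is immediate from the stated convention $l(x) := l'(x) - 1 = k$. Otherwise, insertion cannot shrink level ($l(j) \le l'(j) = k+1$) and by Lemma \ref{3} it cannot grow level by more than one ($l(j) \ge l'(j) - 1 = k$), so $l(j) \in \{k, k+1\}$; the possibility $l(j) = k+1$ is ruled out by Lemma \ref{2}, which combined with $j << i$ would force $l(i) > l(j) = k+1$, contradicting $i \in L_{k+1}$. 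Hence $l(j) = k$ and $j \in T_k$.

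For the backward direction, I would take $i \in L_{k+1}$ and $j \in T_k$ with $j << i$. Since $l'(j) = k+1$, there is a post-insertion increasing subsequence of length $k+1$ ending at $j$, and $j << i$ means it can be extended by appending $i$ to yield a length-$(k+2)$ increasing subsequence ending at $i$. Thus $l'(i) \ge k+2$, and by Lemma \ref{3} applied to $i$ (with $l(i) = k+1$) we get $l'(i) \le k+2$, so $l'(i) = k+2$ and $i \in T_{k+1}$.

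The main subtlety, and the only place the argument is not automatic, is handling the case when the second-last index produced by Lemma \ref{1} in the forward direction happens to be the freshly inserted element $x$. The convention $l(x) := l'(x) - 1$ is exactly what makes $x$ eligible to belong to $T_k$, so the two cases $j = x$ and $j \neq x$ merge into a uniform conclusion. Once this bookkeeping is in place, the rest is a mechanical combination of Lemmas \ref{1}, \ref{2}, and \ref{3}.
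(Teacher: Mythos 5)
Your proof is correct and follows essentially the same route as the paper's: append $i$ to a post-insertion length-$(k+1)$ sequence ending at $j$ for one direction, and apply Lemma \ref{1} to the post-insertion configuration together with Lemma \ref{3} for the other. Your write-up is in fact more careful than the paper's terse version --- in particular, you explicitly rule out $l(j)=k+1$ via Lemma \ref{2} and handle the case $j=x$ through the convention $l(x):=l'(x)-1$, both of which the paper leaves implicit.
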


\begin{proof} 

If such a $j$ exists in $T_k$, appending $i$ to a (post-insertion) sequence of length $k+1$ ending at $j$ gives a sequence of length $k+2$.

Likewise, if $T_{k+1} \subset L_{k+1}$ by definition of $T$. Lemma $1$ tells us that if $l'(i)=k+2$, then the second-last element $j$ of any sequence ending at $i$ must satisfy $l'(j) = k+1$. Likewise, if $l(i) = k+1$, then $l(j) = k$. Therefore if $i \in T_{k+1}$, then there exists $j \in T_k$ with $j << i$. Lemma\ref{3} implies $i \in T_{k+1}$.
\end{proof}

\begin{corollary} 
$L'_{k+1} = \left(L_{k+1} - T_{k+1}\right) \cup T_k$. 
\end{corollary}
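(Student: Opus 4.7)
The plan is to prove the identity by double inclusion, partitioning $L'_{k+1}$ according to the index's previous level. By Lemma \ref{3}, for every index $i$ (including the newly inserted $x$, under the stated convention $l(x) := l'(x)-1$), we have $l'(i) \in \{l(i), l(i)+1\}$. So every element of $L'_{k+1}$ came from either level $k$ or level $k+1$, giving a natural two-case split.

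For the containment $L'_{k+1} \subseteq (L_{k+1} - T_{k+1}) \cup T_k$, I would take $i \in L'_{k+1}$ and split on $l(i)$. If $l(i) = k+1$, then $i \in L_{k+1}$, and since $l'(i) = k+1 \neq k+2$, by definition of $T_{k+1}$ we get $i \notin T_{k+1}$, so $i \in L_{k+1} - T_{k+1}$. If $l(i) = k$, then $l'(i) = k+1$ puts $i$ in $T_k$ directly from the definition.

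For the reverse containment, I would argue each piece of the union lies in $L'_{k+1}$. If $i \in L_{k+1} - T_{k+1}$, then $l(i) = k+1$ and $i \notin T_{k+1}$, so Lemma \ref{3} forces $l'(i) \in \{k+1, k+2\}$; ruling out $k+2$ (since that would put $i$ in $T_{k+1}$) gives $l'(i) = k+1$, i.e., $i \in L'_{k+1}$. If $i \in T_k$, then $l'(i) = k+1$ by definition, so $i \in L'_{k+1}$.

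There is no real obstacle here; the whole statement is essentially a bookkeeping consequence of Lemma \ref{3} together with the definition of $T_k$. The only subtle point to mention explicitly is that the newly inserted element $x$ is handled uniformly because the convention $l(x) := l'(x)-1$ places $x$ in $T_{l'(x)-1}$, consistent with the "moved up by one" interpretation used throughout the case analysis.
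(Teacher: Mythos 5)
Your proof is correct and is essentially the argument the paper intends: the paper states this corollary without proof, and your double-inclusion bookkeeping from Lemma \ref{3} (levels move up by at most one and never down) together with the definition of $T_k$ and the convention $l(x) := l'(x)-1$ is exactly the right justification. The only observation worth adding is that your argument uses only Lemma \ref{3} and the definitions, not the characterization in Lemma \ref{keyLemma} from which the paper nominally derives it, which if anything makes the dependency structure cleaner.
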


\subsection{Finding the Longest Subsequence}

\begin{lemma} 
\label{pred}
Given $i \in L_k$ for $k > 1$, let $j$ be the predecessor of $i$ in $L_{k-1}$. Then $f(j) < f(i)$.
\end{lemma}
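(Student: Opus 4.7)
The plan is to combine Lemma~\ref{1} (which produces one specific witness in $L_{k-1}$ preceding $i$ with smaller $f$-value) with Corollary~\ref{c2} (which forces $f$-values of indices sharing a level to be non-increasing in the index order). The predecessor $j$ of $i$ in $L_{k-1}$ is not a priori the witness given by Lemma~\ref{1}, but it can only sit further to the right, and this is exactly where the corollary helps.

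First, I would invoke Lemma~\ref{1} on the length-$k$ increasing sequence ending at $i$: let $j^\ast$ be the second-to-last index. Then $j^\ast \in L_{k-1}$, $j^\ast < i$, and $f(j^\ast) < f(i)$. In particular, the set of indices in $L_{k-1}$ that are less than $i$ is nonempty, so the predecessor $j$ of $i$ in $L_{k-1}$ exists and satisfies $j^\ast \le j < i$.

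Next I would split into two cases. If $j = j^\ast$, the conclusion $f(j) < f(i)$ is immediate. Otherwise $j^\ast < j$ with both indices lying in $L_{k-1}$, so Corollary~\ref{c2} (applied to $j^\ast < j$ with $l(j^\ast) = l(j) = k-1$) yields $f(j^\ast) \ge f(j)$. Combining this with $f(j^\ast) < f(i)$ from the previous paragraph gives $f(j) \le f(j^\ast) < f(i)$, as required.

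I do not expect a genuine obstacle here: the only subtlety is noting that the witness from Lemma~\ref{1} need not be the predecessor itself, so a direct appeal to Lemma~\ref{1} is insufficient and the monotonicity within a level (Corollary~\ref{c2}) is what closes the gap. The hardest step, such as it is, is recognizing that the predecessor sits between $j^\ast$ and $i$, which is what lets the corollary be applied in the correct direction.
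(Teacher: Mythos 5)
Your proof is correct and follows the paper's argument essentially verbatim: the paper likewise takes the witness $j'$ from Lemma~\ref{1}, observes that the predecessor $j$ satisfies $j \geq j'$, and applies Corollary~\ref{c2} to conclude $f(j) \leq f(j') < f(i)$. Your explicit case split on $j = j^\ast$ versus $j^\ast < j$ is just a slightly more careful presentation of the same step.
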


\begin{proof}
By lemma\ref{1}, there exists an element $j'$ in $L_{k-1}$ with $j' << i$. Since $j$ is the predecessor of $i$, then $j \geq j'$. Corollary \ref{c2} implies $f(j) \leq f(j')$, so $j << i$ and $f(j) < f(i)$ as desired.
\end{proof}
 
Suppose we have inserted some elements of the form $(i, f(i))$, and have corresponding level sets $L_k$. To insert a new element $(x, f(x))$ and maintain our structures $L_k$, we first find $l(x)$ and define $T_{l(x)}:=x$. Then, using lemma \ref{keyLemma}, we can find $L'_{l(x)+1}$ and $T_{l(x)+1}$ from $L_{l(x)+1}$ and $T_{l(x)}$. We then proceed by induction to find $T_{k+1}$ and $L'_k$ for all $k$.

Define $m$ to be the length of the longest subsequence among our inserted pairs $(i, f(i))$. To extract $m$, first determine the largest value of $k$ for which $L_k$ is non-empty. To extract a maximum increasing subsequence, start at any index $i_m$ in $L_m$ and find its predecessor $i_{m-1}$ in $L_{m-1}$. In general, define $i_{k+1}$ to be the predecessor of $i_k$ (using the regular ordering on indices) in $L_{m}$. By construction, $i_k$ is increasing in $k$. Lemma \ref{pred} guarantees that the sequence $f(i_k)$ is increasing in $k$, and that all $i_r$ for $1 \leq r \leq k$ are well defined.  

Any increasing subsequence of any length can be found in a similar manner (but instead of running a predecessor query on $i$, take any element $j$ in the level one less than the level of $i$ where $j << i$). 

It remains to show that the sets $L_k$ can be maintained efficiently. 

\subsection{Maintaining $L_k$}

Suppose an array of elements of the form $(i, f(i))$ with corresponding level sets $L_k$ are given, and then the element $(x, f(x))$ is inserted online into this list. The next lemma we prove will tell us that $T_k$ is a contiguous subsection of $L_k$. 

\begin{lemma}
\label{T}
$T_{k+1}$ consists of the indices $i$ in $L_{k+1}$ satisfying both $\min(T_k) < i$ and $f(i) > f(\max(T_k))$.
\end{lemma}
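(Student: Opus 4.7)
The plan is to prove Lemma \ref{T} by induction on $k \geq l(x)$, strengthening the inductive statement so that it also asserts $T_k$ is a contiguous sub-interval of $L_k$ when $L_k$ is listed in increasing index order. Contiguity is precisely what makes the extremal characterization through $\min(T_k)$ and $\max(T_k)$ valid; without it one could imagine a ``hole'' in $T_k$ at exactly the place a witness for some $i$ would need to live, and the reverse direction of the lemma would fail.

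For the base case $k = l(x)$, we have $T_{l(x)} = \{x\}$ as a singleton, so $\min = \max = x$ and the two conditions collapse to $x < i$ and $f(x) < f(i)$, which is exactly the specialisation of Lemma \ref{keyLemma} to a one-element $T_{l(x)}$. Contiguity of the resulting $T_{l(x)+1}$ in $L_{l(x)+1}$ is then a consequence of Corollary \ref{c2}: writing $L_{l(x)+1}$ in increasing index order, the $f$-values strictly decrease, so the condition ``$i > x$'' keeps a suffix and ``$f(i) > f(x)$'' keeps a prefix, and their intersection is contiguous.

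For the inductive step the forward direction is routine. Given $i \in T_{k+1}$, take the witness $j \in T_k$ with $j << i$ provided by Lemma \ref{keyLemma}. Then $\min(T_k) \leq j < i$, and because $\max(T_k) \geq j$ with both lying in $L_k$, Corollary \ref{c2} gives $f(\max(T_k)) \leq f(j) < f(i)$. The reverse direction is the main obstacle. If $\max(T_k) < i$ one can simply take $j = \max(T_k)$ as the witness, but in the harder case $\max(T_k) > i$ one must produce some $j \in T_k$ with $j < i$ whose $f$-value is still small enough to lie below $f(i)$. My plan here is to let $j$ be the largest element of $T_k$ with $j < i$ (which exists because $\min(T_k) < i$), let $j'$ be its immediate successor in $T_k$ (so $j < i < j'$, using $T_k \subseteq L_k$ and $L_k \cap L_{k+1} = \emptyset$), and invoke the inductive contiguity hypothesis: $j$ and $j'$ are then also consecutive in $L_k$, i.e.\ no element of $L_k$ lies strictly between them in index. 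Applying Lemma \ref{1} to $i \in L_{k+1}$ produces some $i' \in L_k$ with $i' << i$; from $i' < j'$ together with the absence of intermediate $L_k$-elements we get $i' \leq j$, and then Corollary \ref{c2} yields $f(j) \leq f(i') < f(i)$, so $j << i$ and Lemma \ref{keyLemma} places $i$ in $T_{k+1}$.

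Finally, the characterization just proved makes contiguity of $T_{k+1}$ in $L_{k+1}$ immediate by the same ``suffix meets prefix'' argument as in the base case, applied to $L_{k+1}$ via Corollary \ref{c2}. This closes the strengthened induction and proves Lemma \ref{T}.
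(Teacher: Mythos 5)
Your proof is correct and follows essentially the same strategy as the paper's: strengthen the induction to carry contiguity of $T_k$ in $L_k$, split the reverse inclusion into the cases $i > \max(T_k)$ (witness $\max(T_k)$) and $\min(T_k) < i < \max(T_k)$ (witness the predecessor of $i$ in $T_k$, which by contiguity is its predecessor in $L_k$), and close via Lemma \ref{keyLemma} and Corollary \ref{c2}. The only difference is cosmetic: you re-derive the predecessor fact inline from Lemma \ref{1} where the paper cites Lemma \ref{pred}, and you state the forward inclusion explicitly where the paper leaves it implicit.
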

\begin{proof}
Corollary\ref{c2} tells us that there exists a number $U_{k+1}$ such that for all $i\in L_{k+1}$,

$$min (T_k) < i, f(i) > f(\max(T_k)) $$
$$\Longleftrightarrow min(T_k) < i < U_{k+1}$$

Now we proceed by induction. Let $i\in L_{k+1}$. Assume $T_k$ consists of all elements in $L_k$ with $\min(T_{k-1}) < i < U_k$. (Note that this is true for the base case when $k = l(x)$).  Let $j$ be the predecessor of $i$ in $T_k$. It will turn out that the set of indices satisfying $\min (T_k) < i, f(i) > f(\max(T_k))$ are the indices $i$ where $f(i) > f(j)$.

The induction hypothesis implies that for any $i \in L_{k+1}$, its predecessor in $T_k$ is the same as its predecessor in $L_k$ for all $\min(T_k) < i < \max(T_k)$. Lemma \ref{keyLemma} in conjunction with lemma \ref{pred} imply that all such indices are in $T_{k+1}$. Lemma \ref{keyLemma} also implies that all indices $i \in L_{k+1}$ with $i >> \max(T_k)$ are in $T_{k+1}$. Corollary \ref{c2} tells us that $f(i)$ is non-increasing on $i$ (within $L_{k+1}$), so

$$\{i \in L_{k+1}: \min(T_k) < i < \max(T_k)\} \cup \{i \in L_{k+1}: i >> \max(T_k)\}$$
$$=\{i \in L_{k+1}: i > \min (T_k), f(i) > f(\max(T_k))\}$$

\end{proof}

\subsection{Constructing a Data Structure} We now construct a data structure that can efficiently maintain our sets $L_k$. We would like to be able to find the lowest index $i \in L_{k+1}$ with $\min(T_k) > i$, the largest index $i \in L_{k+1}$ with $f(i) > f(\max(T_k))$, and $i$'s predecessor in $L_{k}$. 

To do this, we store the indices of $L_k$ in a red-black tree. Note that a red-black tree on $i$ is automatically a red-black tree for $f(i)$ sorted in reverse order. by corollary \ref{c2}. Red-black trees with $t$ elements can handle predecessor queries in worst case $\mathcal{O}(\log t)$ time, and can also handle splits and concatenations in $\mathcal{O}(\log t)$ time \cite{Booth92}. It follows that it takes worst-case $\mathcal{O}(\log t)$ time to split a red-black tree into two red-black trees, one of which contains all the indices between $m$ and $M$, and the other of which contains all remaining elements

If item $(x, f(x))$ is inserted, $l(x)$ can be found by running a successor query on $f(x)$ for each of $L_k$ starting from $k=1$ until a $k'$ is determined where $f(x)$ is not the largest element in $L_k$. This can be sped up with binary search on $k$, but doing so does not necessarily change the worst case time of the search.

\vspace{2 mm}

Observe that $T_k$ and $T_{k+1}$ are always disjoint. Extracting a red-black tree on $T_k$ can be done via lemma \ref{T} by finding the value $U_k$ (via predecessor query on the reverse-sorted red-black tree on $f(i)$ for $ i \in L_{k}$ with key $f(\max(T_{k-1}))$, and extracting the appropriate index). Then extract the red-black tree for $T_k$ by finding the red-black tree on $L_k$ with indices between $\min(T_k)$ and $U_k$. $L'_k$ can be obtained via merging the remaining red-black tree with the red-black tree on $T_{k-1}$. 

\vspace{2 mm}

The total worst case run time for a single insert is $\sum_{k=1}^r \mathcal{O}\left(\log \left|L_k|\right|\right)$ where $r$ is the length of the longest subsequence at insertion; we need to perform multiple red-black tree operations on each level set $L_k$, each of which takes time $\mathcal{O}\left(\log |L_k|\right)$. Since $\sum_{k=1}^r |L_k|=n$ and $\log$ is a convex function, the upper bound on the run time for any given insertion of $(x, f(x))$ is $\mathcal{O}\left(r \log \left(\frac{n}{r}\right)\right)$. Note that this data structure can be augmented to provide an $O(\log n)$  run time guarantee if the online insertion occurs at the end of the sequence, by maintaining a balanced binary search tree over all $k$, with its $k$ key elements equal to the maximum index $i$ for $i \in L_k$; the insertion of any value at the end of the sequence will only affect one level set $L_k$, and it takes worst case $O(\log n)$ time to find that level set and worst-case $O(\log n)$ time to recompute $L_k$. 

Future avenues of exploration may include having an algorithm that runs in better time for large $r$; our algorithm matches the relatively poor topological sort bound when $r = n$, and it may be of interest to find an improvement to the algorithm when $r$ is large. Additionally, it could be of potential interest to further explore the fingering properties of our algorithm, and whether it takes less worst-case time to insert elements close to the end.

\section{Acknowledgements} The authors wish to give credit to Professor D.  Karger and Professor E. Demaine for helpful conversations. Additionally, the authors would like to thank Joshua Alman for his proofreading assistance.


\begin{thebibliography}{99}

\bibitem{AHRSZ} B. Alphern, R. Hoover, B. K. Roosen, P. F. Sweeney, F. K. Zadeck. Incremental evaluation on computation circuits. In \textit{Proc. $1$st Annuel ACM-SIAM Symposium on Discrete Algorithms}, pages 32--42, 1990 

\bibitem{Booth92} H. D. Booth. An Overview of Red-Black and Finger Trees, 1992

\bibitem{gopalan07} P. Gopalan, T. S. Jayram, R. Krauthgamer, R. Kumar. Estimating the sortedness of a data stream. \textit{SODA '07 Proceedings of the Eighteenth Annual ACM-SIAM Symposium on Discrete Algorithms}, pages 318--327, 2007

\bibitem{kreveld93} M.J. Kreveld, M.H. Overmars. Union-copy structures and dynamic segment trees. \textit{Journal of the ACM}, vol. 40, num. 3, pages 635-652, 1993.

\bibitem{MNR} A. Marchetti-Spaccamela, U. Nanni, and H. Rohnert. Maintaining a topological order under edge insertions. \textit{Information Processing Letters}, 59(1):53--58, 1996

\bibitem{PK} D.J. Pierce, P.H.J. Kelly. A Dynamic Algorithm for Topologically Sorting Directed Acyclic Graphs. In \textit{Proceedings of the $3$rd international workshop on Efficient and experimental Algorithms WEA('04)}, volume 3059 of \textit{Leccture Notes in Computer Science}, pages 383-390, 2004

\bibitem{samuels81} S. Samuels, J. M. Steele. Optimal sequential selection of a monotone sequence from a random sample, \textit{The Annals of Probability} vol. 9, num. 6, pages 937-947, 1981.

\bibitem{sleater85} D. D. Sleator and R. E. Tarjan. Self-adjusting binary search trees. \textit{Journal of the Association for Computational Machinery,} vol. 32, no. 3, pages 652-686, 1985.

\bibitem{ergun08} F. Urgun, Hossein Jowhari. On distance to monotonicity and longest increasing subsequence of a data stream. \textit{SODA '08 Proceedings of the Nineteenth Annual ACM-SIAM Symposium on Discrete Algorithms}, pages 730--736, 2008 

\end{thebibliography}
\end{document}